

 \documentclass[preprints,article,accept,moreauthors,pdftex,10pt,a4paper]{Definitions/mdpi} 

\firstpage{1} 
\makeatletter 
\setcounter{page}{\@firstpage} 
\makeatother
\pubvolume{xx}
\issuenum{1}
\articlenumber{5}
\pubyear{2018}
\copyrightyear{2018}
\newtheorem{lemma}[theorem]{Lemma}
\history{December 12, 2018} 


\pdfoutput=1

\usepackage{array}

\Title{Non-classical Correlations in $n$-Cycle Setting}


\Author{Kishor Bharti $^{1}$, Maharshi Ray $^{1}$ and Leong Chuan Kwek $^{1,2,3,}$*} 

\AuthorNames{Kishor Bharti, Maharshi Ray and Leong Chuan Kwek}

\address{%
$^{1}$ \quad Centre for Quantum Technologies, National University of Singapore\\
$^{2}$ \quad MajuLab, CNRS-UNS-NUS-NTU International Joint Research Unit, Singapore UMI 3654, Singapore\\
$^{3}$ \quad National Institute of Education, Nanyang Technological University, Singapore 637616, Singapore}

\corres{Correspondence: cqtklc@gmail.com}




\abstract{ We explore non-classical correlations in $n$-cycle setting. In particular, we focus on correlations manifested by Kochen-Specker-Klyachko box (KS box), scenarios involving $n$-cycle non-contextuality inequalities and Popescu-Rohlrich  boxes (PR box). We provide the criteria for optimal classical simulation of a KS box of arbitrary $n$ dimension. The non-contextuality inequalities are analysed for $n$-cycle setting, and the condition for the quantum violation for odd as well as even $n$-cycle is discussed. We offer a simple extension of even cycle non-contextuality inequalities to the continuous variable case. Furthermore, we simulate a generalized PR box using KS box and provide some interesting insights. Towards the end, we discuss a few possible interesting open problems for future research.}

\keyword{KS Box; PR Box; Non-contextuality Inequality.}







\begin{document}


\section{Introduction}
 The quantum mechanical description of nature is incompatible with any local hidden variable theory and hence consequently is said to exhibit Bell nonlocality \cite{Bell64}. This counter-intuitive phenomenon finds applications in various quantum information  processing tasks like randomness certification \cite{pironio2010random}, self-testing \cite{popescu1992generic,Yao_self, summers1987bell, tsirel1987quantum} and distributed computing \cite{cleve1997substituting}. The Bell nonlocality can be thought of as a particular case of another not-so-famous phenomenon, referred to as contextuality \cite{CSW, amaral2018graph, kochen1975problem}. Recently, contextuality has been shown to be useful for quantum cryptography \cite{JKA, Cabello_QKD} and self-testing \cite{ST2018}. Furthermore, it has been shown to be a crucial resource responsible for various models of quantum computing including measurement based quantum computing \cite{raussendorf2013contextuality} and fault-tolerant quantum computing \cite{Howard2014}. An in-depth study of non-locality and contextuality is required to harness these features for applications as well as to deepen our understanding of nature. In this regard, we study various nonlocal and contextual resources in various sections of this paper, which we describe subsequently.


In section \ref{KS}, we study the less known Kochen Specker Klyachko box (or KS box) for the $n$-dimensional case. The box as mentioned earlier was first introduced by Jeffrey Bub {\it{et. al.}} in 2009 \cite{bub2009contextuality} and was analysed for 5-dimensional case. We study the box for general $n$-dimensional case and provide the optimal classical strategy as well as corresponding success probability for simulating the box using classical resources.


The Bell nonlocal nature of theories can be witnessed via the violation of certain inequalities, referred to as Bell inequalities and non-contextuality inequalities in the general case of contextuality \cite{amaral2018graph}. In section \ref{Ineq}, we study $n$-cycle contextuality scenario and corresponding non-contextuality inequalities. We explore the odd cycle generalisation of the well-known Klyachko-Can-Binicio\u{g}lu-Shumovsky (KCBS) inequality \cite{KCBS, Liang} and even cycle generalisation of Clauser-Horne-Shimony-Holt
(CHSH) inequality \cite{CHSH}. Following the construction provided by Ara{\'{u}}jo {\it{et. al.}} \cite{Araujo2013}, we discuss the necessary and sufficient condition for the violation of the generalised KCBS inequality and necessary condition for the violation of even-cycle generalisation of CHSH inequality. Note that the even cycle generalisation of CHSH inequality are similar to Braunstein-Caves inequalities \cite{braunstein1990wringing} which have been heavily investigated in the literature. Following the work of Arora {\it{et. al.}} \cite{arora2015proposal}, we provide a simple extension of these even cycle non-contextuality inequalities to the continuous variable case.

Within no-signalling theories, the maximum violation of CHSH inequality is obtained by Popescu-Rohlrich box, also known as PR box \cite{popescu1994quantum}. The PR-box and its analogue for even-cycle generalisation of CHSH inequality are the contents of section \ref{PR}. In their seminal work \cite{bub2009contextuality}, Bub {\it{et. al.}} studied the simulation of a PR box using KS box. We extend the idea to arbitrary dimensional KS box and PR box.  We study the joint probability distribution for the KS box and find the criteria for the violation of even-cycle generalisation of CHSH inequality. Finally, we conclude in section \ref{conclusion}.

The paper connects generalized PR boxes, arbitrary dimensional KS boxes and $n$-cycle noncontextuality inequalities and thus provides the pathway for the study of these contextual and nonlocal resources at their junction.

\section{Simulating KS Box} \label{KS}
\begin{Definition}
An $N$-dimensional Kochen Specker Klyachko box or KS box, defined in \cite{bub2009contextuality} is a no-signalling resource with two inputs, $x,y\in\left\{ 1,2, \cdots, N\right\} $ and two outputs $a,b\in\{0,1\},$  which satisfies the following constraints:
\begin{enumerate}
\item $a=b$ if $x=y$, and 
\item $a.b=0$ if $x\neq y.$
\end{enumerate}
\end{Definition}
A KS-box with marginal probability $p$ for the output '$1$` is referred to as $KS_{p}$ box. For example, the fraction of `$1$'s in a $KS_{\frac{1}{5}}$ box is $\frac{1}{5}$. We shall refer to the KS-box condition corresponding to $a.b = 0$ for unequal   inputs as $\bot.$ It is not possible to simulate the KS box statistics with full accuracy for arbitrary $p$ \cite{bub2009contextuality}. We want to find the probability of successful simulation of KS box statistics for various strategies. Consider an $N$-gon with a $0/1$ assignment to its vertices. A $0/1$ assignment with $M$ `$1$'s for a given $N$-gon corresponding to an $N$-dimensional KS box is referred to as a chart of degree $M$, in short $C_M.$  For example chart $C_1$ for a $5$ dimensional KS box will assign `$1$' to one of the vertices and '$0$` to the rest. The spatially separated parties say, Alice and Bob, will share such charts and using shared randomness decide which chart to use. Clearly using chart $C_0$ and $C_1$ will always satisfy the $\bot$ condition. All other charts will violate the $\bot$ condition up to varying proportion.
\begin{figure}[H]
\label{Chart_2}
\centering
\includegraphics[width=0.25\textwidth]{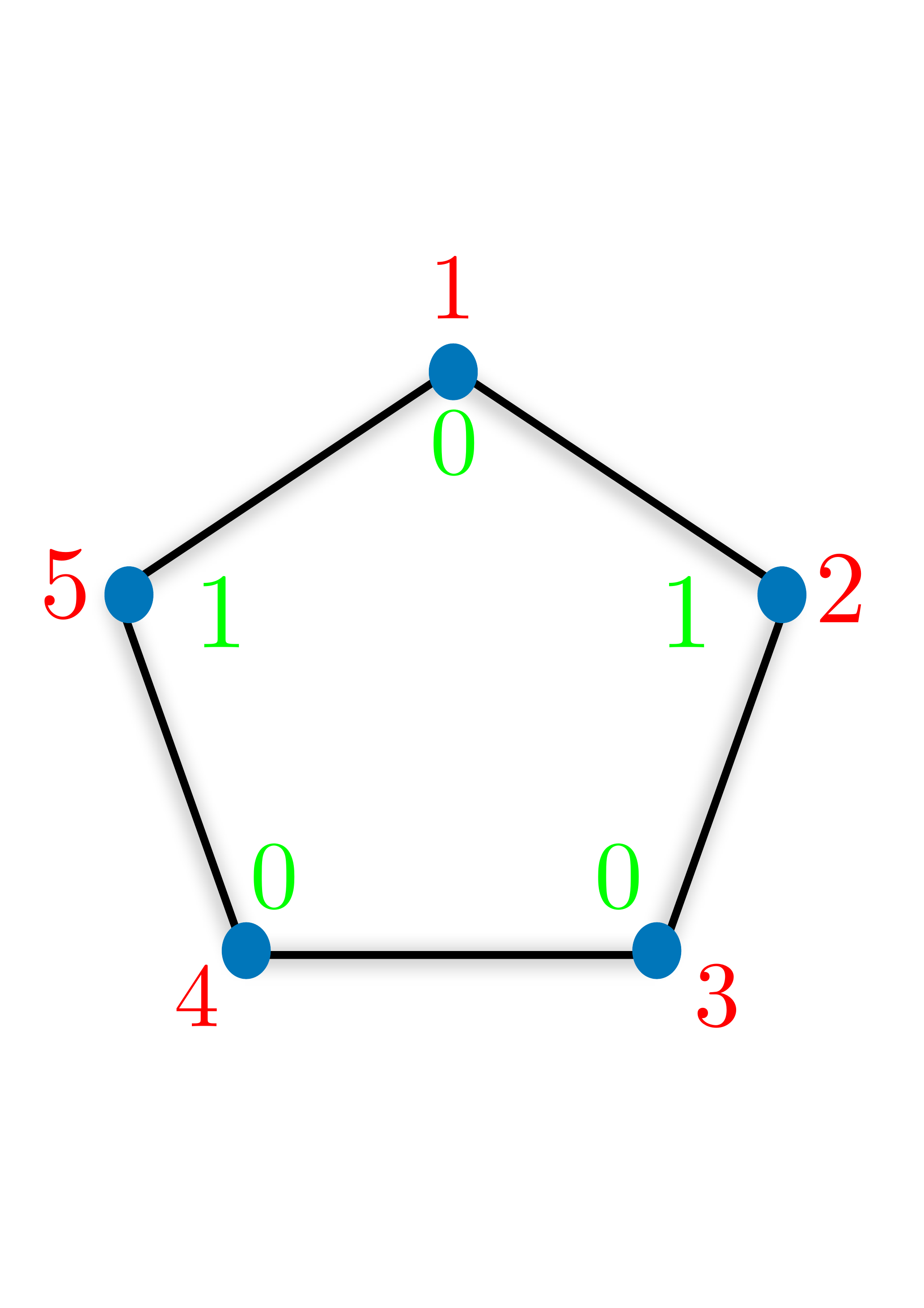}
\centering
\vspace{-1cm}
\caption{Chart $C_2$ for a five-dimensional KS box corresponds to two `$1$'s and three `$0$'s. The red entries correspond to inputs and the outputs are in green. The above chart fails to simulate the KS box statistics when the inputs are $2$ and $5$.}

\end{figure}
Simulating a $KS_p$-box essentially requires the satisfaction of the $\bot$ conditions along with the marginal condition. The use of charts already guarantees equal outputs for same inputs. 

\vspace{0.5cm}
\begin{lemma}
Given the chart $C_M$, the probability of successful simulation of the $\bot$ condition is given by 
\begin{equation}
P_{\bot}\left(C_{M}\right)=\frac{N^{2}-M^{2}+M}{N^{2}}.\label{eq:Succ_SIm}
\end{equation}
\end{lemma}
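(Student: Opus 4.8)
The plan is to treat the simulation as a counting problem over the $N^2$ equally likely input pairs $(x,y)\in\{1,\dots,N\}^2$, assuming the inputs are drawn uniformly. Since Alice and Bob share the same chart $C_M$, each party simply outputs the $0/1$ label that the chart assigns to its received vertex: $a$ is the label at vertex $x$ and $b$ is the label at vertex $y$. First I would note that this strategy satisfies the equal-input constraint $a=b$ whenever $x=y$ for free, because the two parties read off the same vertex of the shared chart. Hence the only way the simulation can fail is through the $\bot$ condition, and only for distinct inputs $x\neq y$.

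Next I would pin down exactly when the $\bot$ condition $a\cdot b=0$ is violated. Given $x\neq y$, the product $a\cdot b$ equals $1$ precisely when the chart labels both vertex $x$ and vertex $y$ with `$1$'; otherwise $a\cdot b=0$ and the condition holds. So the set of failing input pairs is exactly the set of ordered pairs of \emph{distinct} vertices that both carry the label `$1$'.

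The key combinatorial step is to count these failing pairs. The chart $C_M$ has precisely $M$ vertices labeled `$1$', so the number of ordered pairs $(x,y)$ with $x\neq y$ and both labels equal to `$1$' is $M(M-1)$. Dividing by the total number $N^2$ of input pairs gives the failure probability $M(M-1)/N^2$, and the success probability is its complement,
\begin{equation}
P_{\bot}(C_M)=1-\frac{M(M-1)}{N^2}=\frac{N^2-M^2+M}{N^2},
\end{equation}
as claimed.

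I do not anticipate a genuine obstacle here; the statement is elementary combinatorics once the uniform-input model is fixed. The only point requiring care is the distinction between ordered and unordered pairs together with the explicit exclusion of the diagonal $x=y$ (where no $\bot$ constraint applies), since miscounting there would replace $M(M-1)$ by $M^2$ or $M(M-1)/2$ and spoil the formula. As a consistency check I would verify the boundary cases: $C_0$ and $C_1$ give $P_\bot=1$, matching the earlier remark that these charts always satisfy $\bot$, while the fully-labeled chart $C_N$ gives $P_\bot=(N^2-N^2+N)/N^2=1/N$, the residual success arising only from the diagonal inputs.
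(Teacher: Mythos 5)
Your proposal is correct and follows essentially the same argument as the paper: count the $M(M-1)$ ordered pairs of distinct `$1$'-labeled vertices as the only failures among the $N^2$ equally likely input pairs, and take the complement. The paper's proof is exactly this (it even uses the same ordered-pair count), so your additional remarks on ordered versus unordered pairs and the boundary cases $C_0$, $C_1$, $C_N$ are just helpful sanity checks beyond what the paper states.
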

\begin{proof}[Proof]
For an $N$-dimensional KS box, the total number of possible input pairs for Alice and Bob are $N^{2}.$ If they use the chart $C_{M}$ to simulate the KS box, then the probability of failure corresponds to the probability of choosing different inputs with output $1$. The number of such edges (with ordering) whose vertices correspond to output $1$ is $M(M-1).$ 
Thus, the probability of successful simulation is
\begin{equation}
1 - \frac{M(M-1)}{N^2} = \frac{N^{2}-M^{2}+M}{N^{2}}.
\end{equation}
This completes the proof.
\end{proof}

%
%
%
For $p \leq \frac{1}{N},$ Alice and Bob can use chart $C_0$ and $C_1$ to simulate the KS-box. However, we observe that in order to satisfy the marginal constraints for $p > \frac{1}{N},$ one needs to use charts of higher degree, which in turn violates the $\bot$ conditions. Therefore perfect classical simulation of the $KS_p$-box only exists for $p \leq \frac{1}{N}.$ We now fix a $p \leq 0.5$ and compute the optimal classical simulation probability of the $KS_p$-box. 
Now we present our result concerning the optimal probability of successful simulation  for an $N$-dimensional $KS_p$ box for arbitrary $p$.

%
%
%
\begin{Theorem}
For a given $p \leq 0.5,$ the charts $C_{M-1}$ and $C_{M}$ (only chart $C_M$ in case $Np$ is an integer) are optimal for simulating $N$-dimensional $KS_p$-box, where $M = [Np]$ $\left(\text{integral part}\right)$ and the optimal probability of simulation is given by  $$P_{optimal}\left(M,N,p\right)=1- \frac{(2Np-M)(M-1)}{N^2}.$$
\end{Theorem}
\begin{proof}[Proof]
Assume that Alice and Bob play the charts $C_i$ with probability $p_i,$ for $i \in \mathbb{Z}^{\geq},$ i.e. the set of non-negative integers. For a given probability distribution $\{p_i\}$ over charts, the probability of successful simulation of $KS_p$-box  is given by $\sum_i p_iP_{\bot}\left(C_{i}\right).$ Hence the optimal simulation probability is given by the following linear program :
 \begin{align*}
&\max_{\{p_i\}} \sum_i p_iP_{\bot}\left(C_{i}\right) &(\text{success probability})\\
& s.t \sum_i p_ii = Np &(\text{mean condition}) \\
& \sum_i p_i = 1, p_i \geq 0 \, \forall i &(\text{valid probability})
\end{align*}
Now observe that the objective function is 
\begin{align*}
\sum_i p_iP_{\bot}\left(C_{i}\right) &= \frac{1}{N^{2}} \sum_i p_i \left(N^{2}-i^{2}+i\right)\\
&= 1+\frac{p}{N} -\frac{1}{N^2} \sum_i p_i i^2, 
\end{align*}
where in the second equality we used the mean condition along with the valid probability condition. Hence, maximising the objective function corresponds to minimising the variance term with respect to the probability distribution $\{p_i\}.$ The optimisation problem of minimising the variance of a random variable defined on a set of non-negative integral points, over all possible probability distributions, for a fixed given mean, has support size at most two. This can be seen easily using the Karush–Kuhn–Tucker conditions. Specifically, if the mean ($Np$) is an integer (say $=M$), the least variance solution will be $p_M =1$ and $p_i = 0, \forall i \neq M.$ For the case when the mean is not an integer, the least variance solution corresponds to a support containing $M-1$ and $M$, with $M = [Np]$, which follows from simple convexity arguments. With this support, we can compute $p_{M-1}$ and $p_M$ using the mean condition, which evaluates to $p_{M-1} = M-Np$ and $p_M = Np- M +1.$ Plugging this into the success probability function gives us the optimal simulation probability of the $KS_p$-box 
$$P_{optimal}\left(M,N,p\right)=\frac{1}{N^{2}}\left(2Np-2NpM+N^{2}+M^{2}-M\right) = 1- \frac{(2Np-M)(M-1)}{N^2}$$ 
This completes the proof.\end{proof}
\begin{Remark}
Simulation efficiency decreases monotonically with dimension. Five-dimensional KS box is optimal for efficient simulation of any arbitrary marginal $p$.
\end{Remark}

\begin{Remark}
For large $N$, the simulation efficiency tends to $1-p^2.$
\end{Remark}

Having studied the KS box, we move next to the $n$-cycle non-contextuality inequalities.
%
%
%
%
%
%

\section{Analysing n-Cycle Non-Contextuality Inequalities}\label{Ineq}
We analyse the n-cycle generalisation of KCBS and CHSH inequalities. The KCBS inequality is a state-dependent non-contextuality inequality with five dichotomic measurements with $0/1$ outcome. The combination of measurement and corresponding outcome constitutes an event. For example, $\left(a\vert i \right)$ is an event which corresponds to getting outcome ``a'' for measurement ``i''.  Let us represent the probability of getting outcome ``$1$'' given the input was ``$i$'' as $P\left(1\vert i \right)$.  The events follow exclusivity relation according to a graph, referred to as exclusivity graph (a pentagon in this case). The exclusivity relation induces following constraint:
%
\begin{equation}
\label{exclusivity}
P\left(1\vert i \right) + P\left(1\vert j \right) \le 1,
\end{equation}
$\forall i,j \in E$. The KCBS inequality corresponds to sum of  probabilities assigned to five events of the kind $\left(1\vert i \right)$ with exclusivity relation following a pentagon. 
\begin{figure}[H]
\label{Chart_2}
\centering
\includegraphics[width=0.35\textwidth]{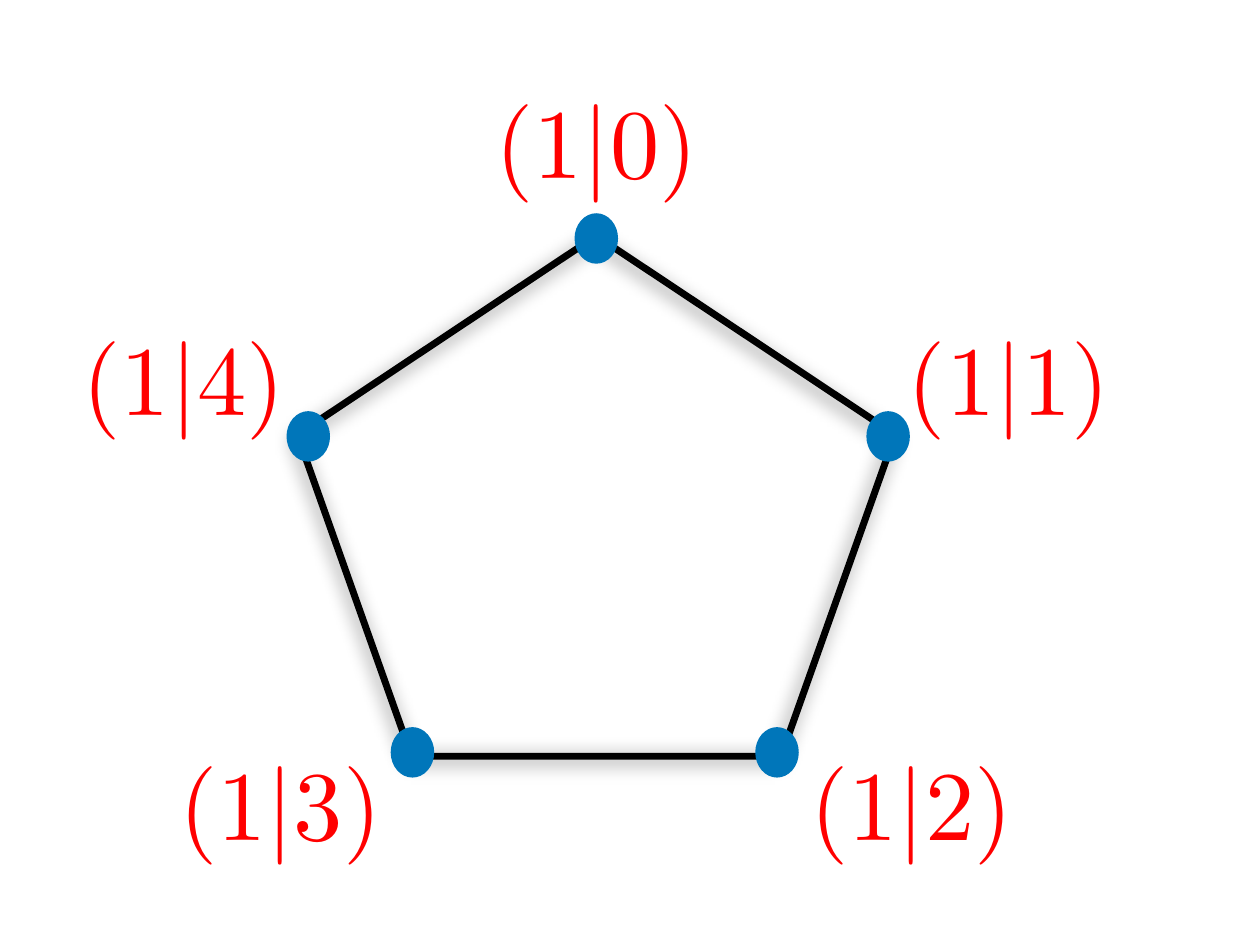}
\centering
\vspace{0 cm}
\caption{The exclusivity graph corresponding to KCBS inequality is a pentagon. The inequality involves five events of type $\left(1\vert i\right)$ where $i \in \{0,1,2,3,4\}. $ The bound on the inequality for non-contextual hidden variable theories is $2.$ Quantum theory achieves up to $\sqrt{5}$ and thus manifests the contextual nature of quantum theory.}

\end{figure}
For a non-contextual hidden variable theory, the bound on the inequality is 2. Formally, the KCBS inequality is given by 
\begin{equation}
\label{eq:KCBS_5}
\sum_{i=0}^{4}P\left( 1\vert i\right)\leq 2.
\end{equation}
The inequality in \eqref{eq:KCBS_5} has been further extended to general odd cycle, which is
\begin{equation}
\label{eq:KCBS_n}
\sum_{i=0}^{n-1}P\left( 1\vert i\right)\leq \frac{n-1}{2}.
\end{equation}
The odd cycle generalisation of KCBS inequality has been studied extensively in literature \cite{Liang,bharti2018simple,Araujo2013,ST2018}. Surprisingly, $\frac{n-1}{2}$ corresponds to independence number of the graph for odd cycle graph \cite{knuth1994sandwich,Liang, CSW}. The maximum quantum violation for generalised KCBS inequality corresponds to Lov{\'a}sz theta number (denoted by $\vartheta\left(G\right)$), which is $\frac{n\cos\left(\frac{\pi}{n}\right)}{1+\cos\left(\frac{\pi}{n}\right)}.$
%
We will represent the density matrices in the standard basis $\{\vert i\rangle\}$ with matrix elements given by $\rho_{ij}=\langle i\vert\rho\vert j\rangle.$ For the odd n-cycle generalisation of KCBS inequality, the projectors corresponding to the optimal quantum violation are given by 

\[
\Pi_{j}=\vert\psi_{j}\rangle\langle\psi_{j}\vert
\]
where
\[
\vert\psi_{j}\rangle=\left(\sin\left(\theta\right)\cos\left(\frac{j\pi\left(n-1\right)}{n}\right),\sin\left(\theta\right)\sin\left(\frac{j\pi\left(n-1\right)}{n}\right),\cos\left(\theta\right)\right)^{T}
\]
and $\cos^{2}\left(\theta\right)=\frac{\cos\left(\frac{\pi}{n}\right)}{1+\cos\left(\frac{\pi}{n}\right)}.$ Now we present the condition under which a qutrit will violate the generalised KCBS inequality for the above measurement settings.  
\begin{Proposition}
A qutrit violates the odd n cycle generalization of KCBS noncontextuality inequality if and only if
$\rho_{33}\ge\left(\frac{\cos\left(\frac{\pi}{n}\right)\left(n-1\right)-1}{n\left(2\cos\left(\frac{\pi}{n}\right)-1\right)}\right)$.
\end{Proposition}
\begin{proof}[Proof]

The generalised KCBS operator for the odd n-cycle scenario can be defined as
\[
K_{n}=\sum_{j=1}^{n}\Pi_{j}.
\]
Adding all the projectors ($\Pi_{j}$s), we get

\[
K_{n}=\sum_{i=1}^{3}k_{i}\vert\phi_{i}\rangle\langle\phi_{i}\vert
\]
where 
\[
\vert\phi_{1}\rangle=\begin{pmatrix}1\\
0\\
0
\end{pmatrix},\vert\phi_{2}\rangle=\begin{pmatrix}0\\
1\\
0
\end{pmatrix},\vert\phi_{3}\rangle=\begin{pmatrix}0\\
0\\
1
\end{pmatrix}
\]
and 
\[
k_{1}=\frac{1}{1+\cos\left(\frac{\pi}{n}\right)}\sum_{j=1}^{n}\cos^{2}\left(\frac{j\pi\left(n-1\right)}{n}\right)
\]

\[
k_{2}=\frac{1}{1+\cos\left(\frac{\pi}{n}\right)}\sum_{j=1}^{n}\sin^{2}\left(\frac{j\pi\left(n-1\right)}{n}\right)
\]

\[
k_{3}=n\cos^{2}\left(\theta\right)=\frac{n\cos\left(\frac{\pi}{n}\right)}{1+\cos\left(\frac{\pi}{n}\right)}.
\]
Since $\sum_{j}\cos^{2}\left(\frac{j\pi\left(n-1\right)}{n}\right)=\sum_{j}\sin^{2}\left(\frac{j\pi\left(n-1\right)}{n}\right)=\frac{n}{2},$
we get

\[
k_{1}=k_{2}=\frac{n}{2\left(1+\cos\left(\frac{\pi}{n}\right)\right)}
\]

\[
k_{3}=\frac{n\cos\left(\frac{\pi}{n}\right)}{1+\cos\left(\frac{\pi}{n}\right)}
\]
The odd n-cycle noncontextuality inequality is written as

\[
\langle K_n\rangle\le\frac{n-1}{2},
\]
where $\langle K_n\rangle$ corresponds to the expectation value of the generalised KCBS operator with respect to the underlying preparation.
In terms of quantum expectation, the inequality is given by 
\[
\text{Tr}\left(K_n\rho\right)\le\frac{n-1}{2}.
\]
Note that the generalised KCBS operator is diagonal in standard basis and leads to the following simplification:
\[
\frac{n}{2\left(1+\cos\left(\frac{\pi}{n}\right)\right)}\left[\rho_{11}+\rho_{22}\right]+\frac{n\cos\left(\frac{\pi}{n}\right)}{1+\cos\left(\frac{\pi}{n}\right)}\left[\rho_{33}\right]\le\frac{n-1}{2}.
\]
Since the trace of a density matrix is always 1, the condition for the violation
of odd n-cycle non-contextuality inequality becomes;

\[
\rho_{33}>\frac{\left(n-1\right)\left(1+\cos\left(\frac{\pi}{n}\right)\right)-n}{2\cos\left(\frac{\pi}{n}\right)-1}.
\]
Simplifying the above expression, we get
\begin{equation}
\rho_{33}>\frac{\cos\left(\frac{\pi}{n}\right)\left(n-1\right)-1}{n\left(2\cos\left(\frac{\pi}{n}\right)-1\right)}.
\end{equation}
This completes the proof.
\end{proof}
\begin{Remark}
We can see that the set of quantum states for qutrits, which can violate
odd n cycle noncontextuality inequality, shrinks as we increase $n.$
In the infinite $n$ scenario, the only qutrit which violates the
inequality is the pure state $\vert\psi\rangle=\left(0,0,1\right)^{T}$!

\begin{figure}[H]
\begin{center}\includegraphics[scale=0.23]{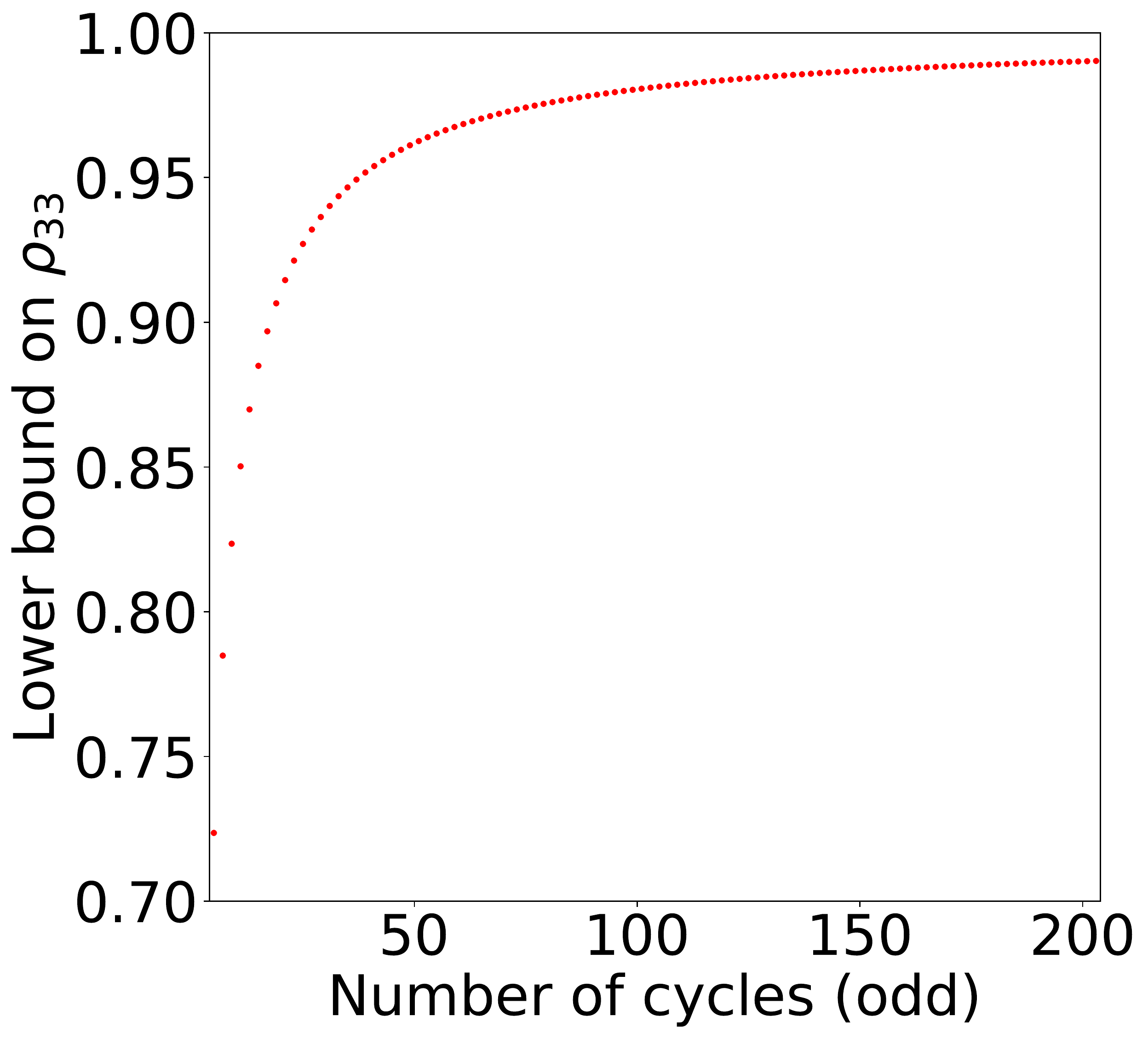}\end{center}\caption{The condition for the quantum violation of the odd $n$ cycle generalisation of KCBS inequality is computed. Lower bound on $\rho_{33}$ for odd n cycle graph has been plotted as a function of $n$. The set of states which can violate the KCBS inequality corresponding to optimal measurement setting shrinks as we increase $n$.}
\end{figure}
\end{Remark}
%
The $n$-cycle generalization of CHSH inequality is referred to as chained Bell inequality \cite{Araujo2013, braunstein1990wringing}. The even $n$-cycle scenario has $n$ measurements i.e $\{X_1,X_2,\cdots,X_n\}$. Each of these are dichotomic measurements with possible outcomes $\pm1.$ 
The chained Bell inequality of cycle $n$ is given by
\begin{equation}
\sum_{j=1}^{n-1}\left\langle X_{j}X_{j+1}\right\rangle -\left\langle X_{n}X_{1}\right\rangle \le n-2. \label{chained-CHSH}
\end{equation}

The optimal construction \cite{Araujo2013} for violation of this inequality corresponds to $X_{j}=\widetilde{X_{j}}\otimes\mathbb{I}$ for even $j$ and $X_{j}=\mathbb{I}\otimes\widetilde{X_{j}}$ for
odd $j$, where 
\begin{equation}
\widetilde{X_{j}}=\cos\left(\frac{j\pi}{n}\right)\sigma_{x}+\sin\left(\frac{j\pi}{n}\right)\sigma_{z}. \label{Ops}
\end{equation}

We now provide the necessary condition for the quantum violation of a chained Bell inequality corresponding to optimal quantum measurement settings.
\begin{Proposition}
For a given two qubit state, the necessary condition for the quantum violation
of chained Bell inequality of cycle $n$ is given by the difference of
its extremal eigenvalues i.e.
\end{Proposition}
\begin{equation}
\lambda_{1}-\lambda_{4} > \frac{n-2}{n}.
\end{equation}
\begin{proof}[Proof]
For even $j$,

\[
X_{j}X_{j+1}=\widetilde{X_{j}}\otimes\widetilde{X_{j+1}}
\]

\begin{equation}
=\left[\cos\left(\frac{j\pi}{n}\right)\sigma_{x}+\sin\left(\frac{j\pi}{n}\right)\sigma_{z}\right]\otimes\left[\cos\left(\frac{\left(j+1\right)\pi}{n}\right)\sigma_{x}+\sin\left(\frac{\left(j+1\right)\pi}{n}\right)\sigma_{z}\right].\label{even-1}
\end{equation}
Similarly for odd $j$,

\begin{equation}
X_{j}X_{j+1}=\left[\cos\left(\frac{\left(j+1\right)\pi}{n}\right)\sigma_{x}+\sin\left(\frac{\left(j+1\right)\pi}{n}\right)\sigma_{z}\right]\otimes\left[\cos\left(\frac{j\pi}{n}\right)\sigma_{x}+\sin\left(\frac{j\pi}{n}\right)\sigma_{z}\right].\label{even-2}
\end{equation}

Further,
\[
X_{n}X_{1}=\widetilde{X_{n}}\otimes\widetilde{X_{1}}
\]

\begin{equation}
=-\cos\left(\frac{\pi}{n}\right)\sigma_{x}\otimes\sigma_{x}-\sin\left(\frac{\pi}{n}\right)\sigma_{x}\otimes\sigma_{z}. \label{even-3}
\end{equation}

Using \ref{even-1}, \ref{even-2} ,\ref{even-3} and basic arithmetics,
the $n$-cycle chained Bell inequality for quantum systems transforms as 

\[
\frac{n}{2}\cos\left(\frac{\pi}{n}\right)\langle\sigma_{x}\otimes\sigma_{x}\rangle+\frac{n}{2}\cos\left(\frac{\pi}{n}\right)\langle\sigma_{z}\otimes\sigma_{z}\rangle+\frac{n}{2}\sin\left(\frac{\pi}{n}\right)\langle\sigma_{x}\otimes\sigma_{z}\rangle-\frac{n}{2}\sin\left(\frac{\pi}{n}\right)\langle\sigma_{z}\otimes\sigma_{x}\rangle\le n-2,
\]
which further simplifies to 
\[
\cos\left(\frac{\pi}{n}\right)\left[\langle\sigma_{x}\otimes\sigma_{x}\rangle+\langle\sigma_{z}\otimes\sigma_{z}\rangle\right]+\sin\left(\frac{\pi}{n}\right)\left[\langle\sigma_{x}\otimes\sigma_{z}\rangle-\langle\sigma_{z}\otimes\sigma_{x}\rangle\right]\le\frac{2\left(n-2\right)}{n}.
\]

For a two qubit density matrix $\rho,$ this translates into 
\begin{equation}
\text{Tr}\left(O_{n}\rho\right)\le\frac{2\left(n-2\right)}{n},\label{Sp1}
\end{equation}
where $O_{n}=\cos\left(\frac{\pi}{n}\right)\left[\sigma_{x}\otimes\sigma_{x}+\sigma_{z}\otimes\sigma_{z}\right]+\sin\left(\frac{\pi}{n}\right)\left[\sigma_{x}\otimes\sigma_{z}-\sigma_{z}\otimes\sigma_{x}\right].$
The condition for violation of n-cycle chained Bell inequality becomes
\begin{equation}
\text{Tr}\left(O_{n}\rho\right)>\frac{2\left(n-2\right)}{n}\label{sp3}
\end{equation}
The eigenvalues of $O_{n}$ are $2,0,0,-2.$ Suppose the eigenvalues
of $\rho$ are $\lambda_{1}\ge\lambda_{2}\ge\lambda_{3}\ge\lambda_{4},$
then
\begin{equation}
\text{Tr}\left(O_{n}\rho\right)\le2\left(\lambda_{1}-\lambda_{4}\right).\label{sp2}
\end{equation}

Using \ref{sp2} and \ref{sp3}, the necessary condition
for the violation of $n$-cycle chained Bell inequality turns out to
be	
$$
\lambda_{1}-\lambda_{4} > \frac{n-2}{n}.
$$
This completes the proof.
\end{proof}
\begin{Remark}
It is easy to see that set of two-qubit quantum states that can violate
chained Bell inequality shrinks as we increase $n$.
In the infinite $n$ scenario, the only two qubit state that violates the
inequality is a Bell state!
\end{Remark}
\begin{figure}[H]
\begin{center}\includegraphics[scale=0.23]{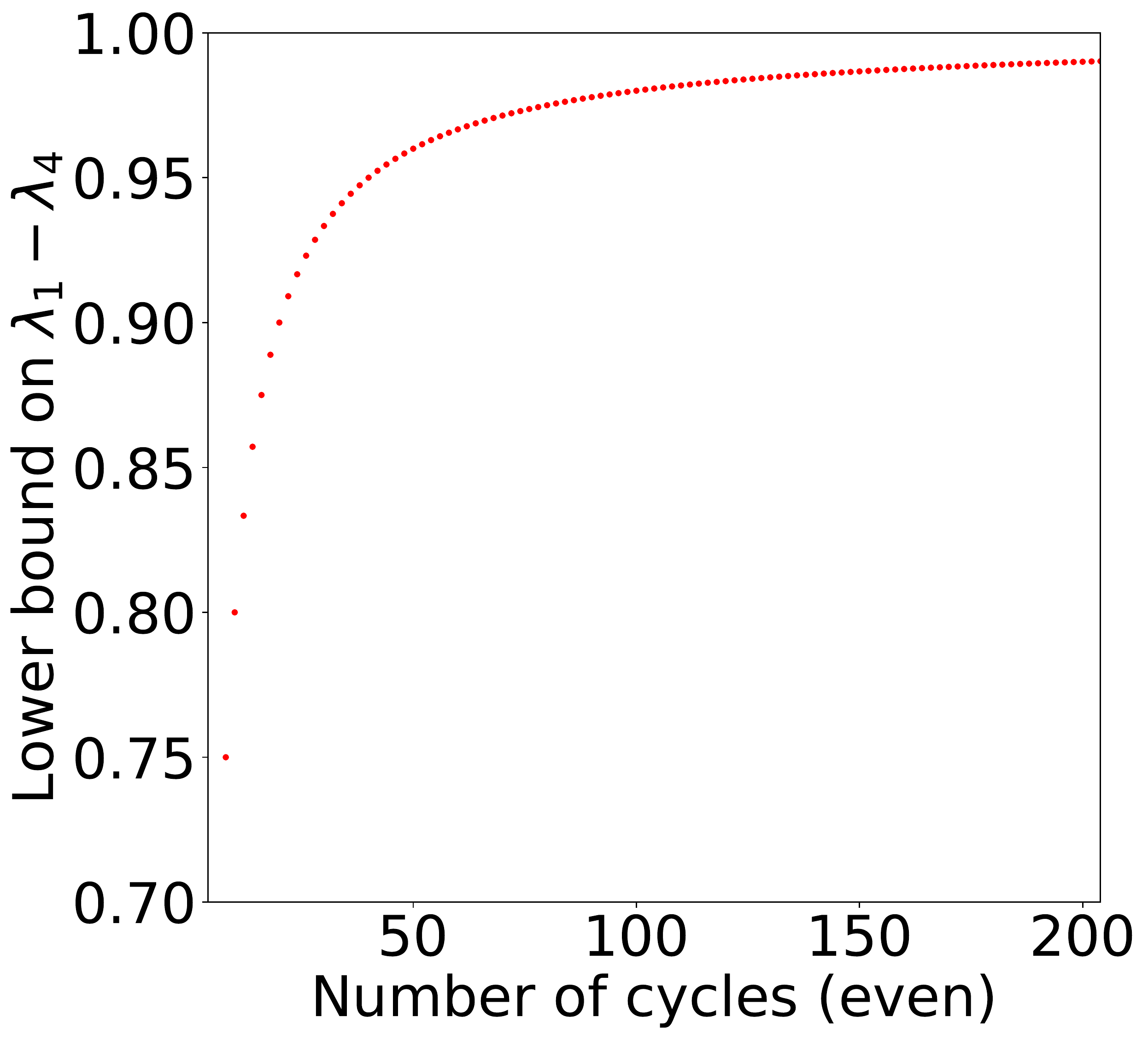}\end{center}\caption{Here we plot the lower bound on the difference of extremal eigenvalues of a two qubit density matrix as a function of even values of $n$. The set of two qubit quantum states, which could potentially violate
chained Bell inequality ( as our  is necessary and not sufficient), shrinks as we increase $n.$
In the infinite $n$ scenario, the only two qubit state that might violate the
inequality is Bell state!}
\end{figure}

Now, we move on to provide a simple extension of even cycle non-contextuality inequalities (as in equation \ref{chained-CHSH}   to the continuous variable case.
%
The even cycle non-contextuality inequalities can be extended to continuous
variables quite easily following the work of Arora {\it{et. al}} \cite{arora2015proposal} where the authors provide the continuous variable extension for $n=4$.
We have already discussed the construction corresponding to the maximal violation of inequality in \eqref{chained-CHSH}.
The inequality is maximally violated by $\left(0,1/\sqrt{2},-1/\sqrt{2},0\right)^T$
and the maximum violation is $n\cos\left(\pi/n\right).$ We define the following non-contextuality
operator in this regard,

\begin{equation}
C_{n}=\sum_{j=1}^{n-1}X_{j}X_{j+1}-X_{n}X_{1}.
\end{equation}

We know that
\begin{equation}
\exp\left(\iota\theta\boldsymbol{n.\sigma}\right)\boldsymbol{\sigma}\exp\left(-\iota\theta\boldsymbol{n.\sigma}\right)=\boldsymbol{\sigma}\cos\left(2\theta\right)+\boldsymbol{n}\times\boldsymbol{\sigma}\sin\left(2\theta\right)+\boldsymbol{n}\text{ }\boldsymbol{n}.\boldsymbol{\sigma}\left(1-\cos\left(2\theta\right)\right)
\end{equation}
For $\sigma=\sigma_{x}\hat{x}$ and $n=\hat{z},$

\begin{equation}
\exp\left(\iota\theta\sigma_{z}\right)\sigma_{x}\exp\left(-\iota\theta\sigma_{z}\right)=\sigma_{x}\cos\left(2\theta\right)+\sigma_{y}\sin\left(2\theta\right)
\end{equation}
Let us look back at the operator in equation \ref{Ops} more closely. This
can be thought of as $\sigma_{x}$ rotated around $z$ axis with angle
$\left(\frac{j\pi}{2n}\right).$ To get the continuous variable representation,
let us start with the quantum mechanical translational operator $\exp\left(\frac{-\iota pL}{\hbar}\right),$
which translates a particle by distance $L$. This operator is not
hermitian and hence we introduce the following symmetric combination to make it hermitian,
\begin{equation}
\mathcal{X}\left(0\right)\equiv\frac{e^{-\iota pL/\hbar}+e^{\iota pL/\hbar}}{2}=\cos\left(\frac{pL}{\hbar}\right).
\end{equation}
Let $U(\phi)=\exp\left(\frac{\iota\mathcal{Z}\phi}{2}\right)$ where
$\mathcal{Z}=\text{sgn}\left(\sin\left(\frac{q\pi}{L}\right)\right).$ One can easily see that

\begin{equation}
\mathcal{X}(\phi)\equiv U^{\dagger}\left(\phi\right)\mathcal{X}(0)U^{\dagger}\left(\phi\right)
\end{equation}
and 
\begin{equation}
\widetilde{X_{j}}=\mathcal{X}\left(\frac{j\pi}{n}\right).
\end{equation}

Let $\phi\left(q\right)=\langle q\vert\phi\rangle$ be the localized
quantum state symmetric about $q=\frac{L}{2},$ for some length scale $L$.
and $\phi_{n}\left(q\right)\equiv\phi\left(q-nL\right).$ Using this
construction, following states are defined:

\begin{equation}
\vert\psi_{0}\rangle\equiv\frac{1}{\sqrt{M}}\sum_{n=-\frac{M}{2}}^{n=\frac{M-1}{2}}\vert\phi_{2n+1}\rangle
\end{equation}


\begin{equation}
\vert\psi_{1}\rangle\equiv\frac{1}{\sqrt{M}}\sum_{n=-\frac{M}{2}}^{n=\frac{M-1}{2}}\vert\phi_{2n}\rangle
\end{equation}
Let $ \vert\psi_{+}\rangle\equiv\frac{\vert\psi_{0}\rangle+\vert\psi_{1}\rangle}{\sqrt{2}}$
and  $\vert\psi_{-}\rangle\equiv\frac{\vert\psi_{0}\rangle-\vert\psi_{1}\rangle}{\sqrt{2}}.$
 Interestingly, for $N=2M$,
\begin{equation}
\langle\psi_{+}\vert\mathcal{X}\vert\psi_{+}\rangle=\left(\frac{N-1}{N}\right),
\end{equation}
and
\begin{equation}
\langle\psi_{-}\vert\mathcal{X}\vert\psi_{-}\rangle=-\left(\frac{N-1}{N}\right).
\end{equation}
The appropriate entangled state which shows the violation is
\begin{equation}
\vert\psi\rangle\equiv\frac{\vert\psi_{+}\rangle_{1}\vert\psi_{-}\rangle_{2}-\vert\psi_{-}\rangle_{1}\vert\psi_{+}\rangle_{2}}{\sqrt{2}}\label{State}.
\end{equation}
Let us calculate the violation for the state in \ref{State}.

\begin{equation}
\langle\mathcal{X}(\phi)\otimes\mathcal{X}(\theta)\rangle=-\left(\frac{N-1}{N}\right)^{2}\cos\left(\phi-\theta\right)
\end{equation}

\begin{equation}
\langle C_{n}\rangle=-\left(\frac{N-1}{N}\right)^{2}\left[(n-1)\cos\left(\pi/n\right)-\left\{ -\cos\left(\pi/n\right)\right\} \right]
\end{equation}

\begin{equation}
=-\left(\frac{N-1}{N}\right)^{2}n\cos\left(\pi/n\right)
\end{equation}
For large $N,$
\begin{equation}
\frac{N-1}{N}\rightarrow1,
\end{equation}
and hence we get the maximum quantum violation $n\cos\left(\frac{\pi}{n}\right).$ 
The experimental implementation of the continuous variable extension is quite simple and follows directly from the work of Arora {\it{et. al.}} \cite{arora2015proposal}.
\section{Simulating PR Box}\label{PR}
The KS box is a powerful resource which can be used to efficiently simulate the most non-local no-signalling box, i.e. PR box \cite{popescu1994quantum}. The PR box has initially been defined as the box which allows maximum violation of the CHSH inequality in no-signalling theories. One can generalise the notion of PR box corresponding to chained Bell inequalities.
\begin{Definition}
A PR-box is a no-signalling resource with input pair $x, y$ and corresponding output pair $a, b$ where each of these variables takes their values from the set $\left\{ 0,1\right\}.$ The statistics of the PR box follows the following relation:
\begin{equation}
xy=a\oplus b,
\end{equation}
which means that the outputs are different if and only if the inputs are $x=y=1,$ otherwise the outputs are same.  The PR box can be generalised for input pair $(x,y)\in \{1,2,\cdots, n\}^2$ and  output from the set  $\left\{ 0,1\right\}$ such that outputs are same when inputs are anything except $\{1,1\}.$ When inputs are $\{1,1\},$ the outputs must be different.
\end{Definition}

Now suppose Alice and Bob are equipped with an arbitrary dimensional KS box. The following table gives the joint probabilities for an $n$-dimensional $KS_p$ box.
\begin{table}[h!]
\begin{center}
\begin{tabular}{|ll||ll|ll|ll|lll|} \hline
   &$x$&$1$& &$2$ & &$\hdots$ & &$n$&&\\
   y&&&&&&&&&&\\
  \hline\hline $1$&&$1-p$ &$0$ &$1-2p$ &$p$ & $\ddots$ &&$1-2p$&$p$& \\
  &&$0$ &$p$ &$p$ &$0$ & & &$p$&$0$&\\\hline

 $2$&&$1-2p$ &$p$ &$1-p$ &$0$ & $\ddots$&  &$1-2p$&$p$&\\
  &&$p$ &$0$ &$0$ &$p$& &   &$p$&$0$&\\\hline

$\vdots$& & $\ddots$ &&  $\ddots$&&  $\ddots$ &&  $\ddots$&&\\\hline

  $n$&&$1-2p$ &$p$ &$1-2p$ &$p$ & $\ddots$&  &$1-p$&$0$&\\
  &&$p$ &$0$ &$p$ &$0$& &   &$0$&$p$&\\\hline

\hline
\end{tabular}
\end{center}
 \caption{The table displays the joint probabilities for an n-dimensional KS$_{p}$-box. Note that each of the blocks along the diagonal are same and similarly all the off diagonal blocks are same. Within a block, the top left element is the probability of getting $(0,0)$, top right signifies the probability of getting $(0,1)$, bottom left indicates the corresponding value for $(1,0)$ and, the probability for $(1,1)$ is indicated by the bottom right entry.  }
\end{table}

KS Box is more powerful than PR box and can be used to simulate the same \cite{bub2009contextuality}.
We ask whether Alice and Bob can simulate a generalised PR box (as defined before) using $KS_{p}$ box. The answer is in the affirmative, and we provide a simple strategy to do so.
\begin{Proposition}
A PR box of dimension (number of inputs for each party) $n$ can be simulated efficiently using a KS
box of dimension $2n-1$ with marginal value of $p=\frac{1}{2}$.
\end{Proposition}
\begin{proof}[Proof]

To prove our claim, we provide the following strategy: Alice relabels her inputs for PR box as follows:
$$1 \rightarrow 1, 2 \rightarrow 2, 3 \rightarrow 4, 4 \rightarrow 6 \cdots,  n \rightarrow 2n-2.$$
Similarly, Bob relabels his inputs as follows:
$$1 \rightarrow 1, 2 \rightarrow 3, 3 \rightarrow 5, 4 \rightarrow 7 \cdots,  n \rightarrow 2n-1.$$
The relabelled inputs are used as fresh input for the $KS_{\frac{1}{2}}$ box. Alice outputs what she gets as output from the $KS_{\frac{1}{{2}}}$ . Bob flips his output from $KS_{\frac{1}{2}}$ box in every round and outputs the resultant value. This strategy simulates the statistics corresponding to generalised PR box.
\end{proof}

%
%
%
%
Given the even cycle generalisation of CHSH inequality, the marginal probabilities $p$ in the $KS_{p}$ required to saturate classical bound, quantum bound  and no-signalling bound are given by 
$$
p_{\text{c}}\le\frac{n-2}{2(n-1)}, 
$$
$$
p_{\text{q}}\le\frac{n\left(\text{\ensuremath{\cos}}\left(\frac{\pi}{n}\right)+1\right)-2}{4(n-1)}
$$
and 
\begin{equation}
p_{\text{NS}}\le\frac{1}{2}
\end{equation}
respectively.
\begin{Remark}
For a large value of $n$, all the above probability expressions tend
to one half. However, the quantum probability approaches the PR box
limit of $\frac{1}{2}$ significantly faster than the classical
probability.For large n, all these probabilities approach $\frac{1}{2}.$
\end{Remark}

\begin{figure}[H]
\begin{center}\includegraphics[scale=0.23]{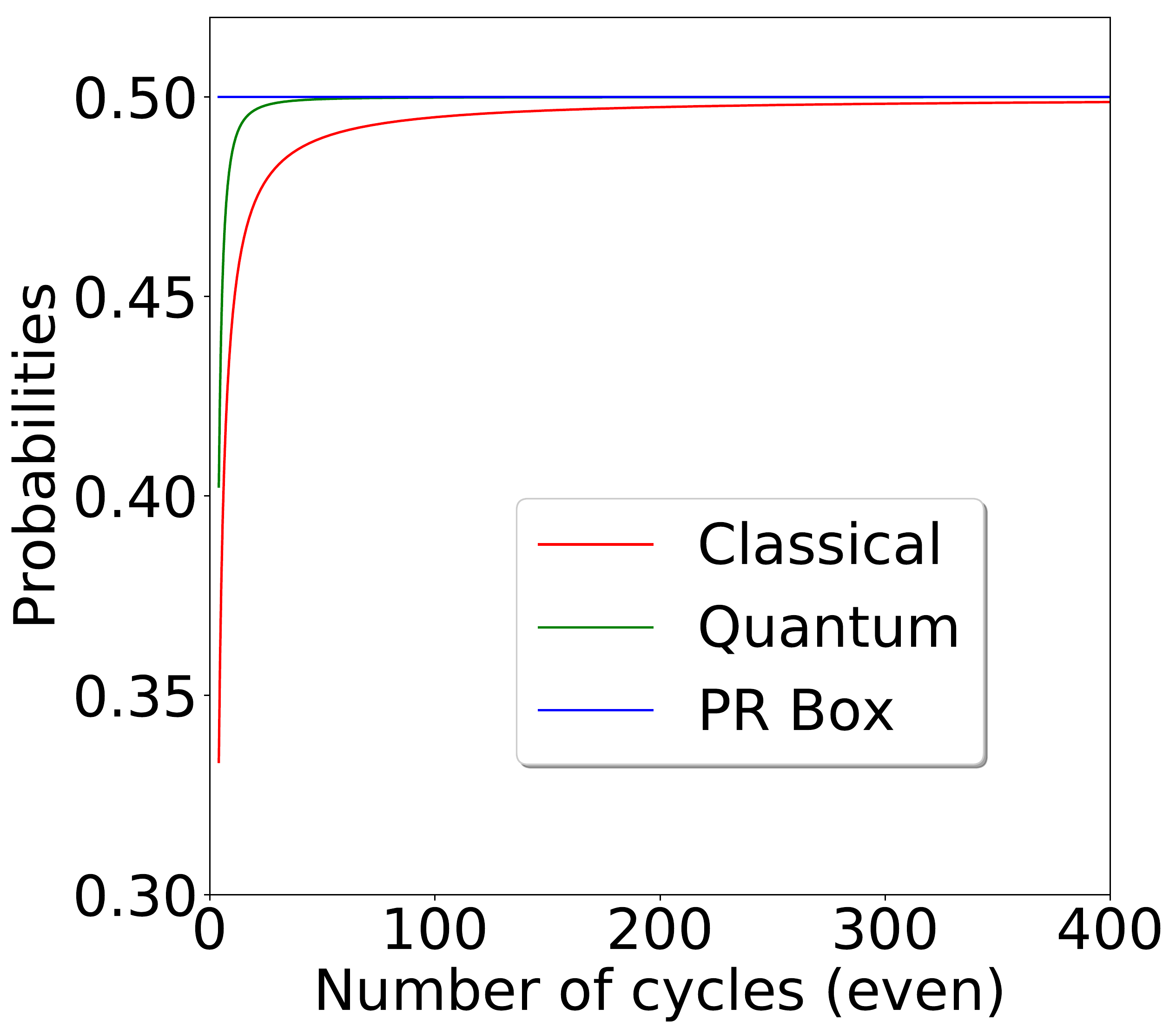}\end{center}\caption{We look at KS box probabilities in various regimes. Note that the quantum probability approaches the PR box limit faster than classical probability as we increase the number of cycles.}
\end{figure}

\section{Conclusion} \label{conclusion}
To conclude, we studied arbitrary dimensional KS box, generalised PR box and $n$-cycle non-contextuality inequalities in this work. We provided the optimal classical strategy and the corresponding success probability for classically simulating the KS box. For future work, it is worthwhile exploring the optimal quantum strategy for this purpose. We provided the sufficient condition for the violation of the generalised KCBS inequality and necessary condition for the violation of even-cycle generalisation of CHSH inequality. We also discussed the continuous variable extension of even-cycle generalisation of CHSH inequality. We leave the continuous variable extension of KCBS and generalised KCBS inequality for future work. 
We also studied the strategy for simulating a generalised PR box using KS box.  It is also interesting to explore further how the generalised PR box, arbitrary dimensional KS box and $n$-cycle non-contextuality inequalities are related to each other and their implications.

\acknowledgments{We thank Atul Singh Arora, Naresh Boddu and Debasis Mondal for useful discussions. KB and MR acknowledge the CQT Graduate Scholarship. KB, MR and LCK are thankful to the Natural Research Foundation and the Ministry of Education, Singapore for financial support.}

\reftitle{References}
%
%
\externalbibliography{yes}
\bibliography{KS}
%
%
%
\end{document}